\definecolor{dark blue}{rgb}{0.121,0.47,0.705}
\definecolor{dark red}{rgb}{0.89,0.102,0.109}
\definecolor{dark orange}{rgb}{1,0.498,0}
\definecolor{dark green}{rgb}{0.2,0.627,0.172}
\definecolor{dark purple}{rgb}{0.415,0.239,0.603}
\definecolor{dark brown}{rgb}{0.651, 0.337, 0.157}
\let\emph\relax
\DeclareTextFontCommand{\emph}{\color{dark blue}\em}
\DeclarePairedDelimiter\set{\lbrace}{\rbrace}
\DeclarePairedDelimiter\abs{\lvert}{\rvert}
\DeclarePairedDelimiter\croc{\langle}{\rangle}
\def\Oh{\ensuremath{\mathcal{O}}}
\newcommand{\tc}{\textsc{TreeColumns}\xspace}
\newcommand{\tcv}[1]{\textsc{TreeColumns|V#1}\xspace}
\newcommand{\fas}{\textsc{FAS}\xspace}
\newcommand{\ifas}{\textsc{IFAS}\xspace}
\newcommand{\restateref}[1]{\IfAppendix{\hyperref[#1]{$\star$}}{\hyperref[#1*]{$\star$}}}
\crefname{observation}{Observation}{Observations}
\Crefname{observation}{Observation}{Observations}
\crefname{proposition}{Proposition}{Propositions}
\Crefname{proposition}{Proposition}{Propositions}
\newcommand{\etal}{{et~al.}\xspace}
\renewcommand{\orcidID}[1]{\href{https://orcid.org/#1}{\includegraphics[scale=.03]{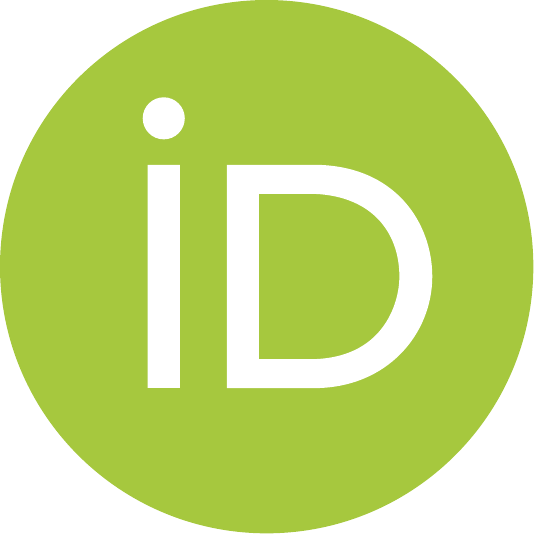}}}
\let\doendproof\endproof
\renewcommand\endproof{~\hfill$\qed$\doendproof}
\title{Tree Drawings with Columns\thanks{J.K.\ was supported by MBIE grant UOAX1932, J.Z.\ by DFG project Wo758/11-1.}}    
\author{Jonathan~Klawitter\inst{1}\orcidID{0000-0001-8917-5269} 
\and Johannes~Zink\inst{2}\orcidID{0000-0002-7398-718X}}
\authorrunning{J. Klawitter and J. Zink}
\institute{University of Auckland, New Zealand 
\and Universität Würzburg, Germany}
\begin{document}

\maketitle

\pdfbookmark[1]{Abstract}{Abstract}
\begin{abstract}
Our goal is to visualize an additional data dimension of a tree with multifaceted data 
through superimposition on vertical strips, which we call \emph{columns}.
Specifically, we extend upward drawings of unordered rooted trees 
where vertices have assigned heights by mapping each vertex to a column.
Under an orthogonal drawing style and with every subtree within a column drawn planar,
we consider different natural variants concerning the arrangement of subtrees within a column.
We show that minimizing the number of crossings in such a drawing
can be achieved in fixed-parameter tractable (FPT) time 
in the maximum vertex degree~$\Delta$ for the most restrictive variant,
while becoming NP-hard (even to approximate)
already for a slightly relaxed variant.
However, we provide an FPT algorithm 
in the number of crossings plus~$\Delta$, and an FPT-approximation algorithm in~$\Delta$
via a reduction to %the
feedback arc set. %problem. 

\keywords{tree drawing \and multifaceted graph \and feedback arc set \and NP-hardness \and fixed-parameter tractability \and approximation algorithm}
\end{abstract}

\section{Introduction} % ------------------------
\label{sec:intro}
Visualizations of trees have been used for centuries, 
as they provide valuable insights into the structural properties 
and visual representation of hierarchical relationships~\cite{Lim14}. 
Over time, numerous approaches have been developed to create tree layouts 
that are both aesthetically pleasing and rich in information.
These developments have extended beyond displaying the tree structure alone
to also encompass different \emph{facets} (dimensions) of the underlying data. 
As a result, researchers have explored various layout styles beyond layered node-link diagrams
and even higher-dimensional representations~\cite{treevis,Rus13}.
Hadlak, Schumann, and Schulz~\cite{HSS15} introduced the term of a \emph{multifaceted graph}
for a graph with associated data that combines various facets (also called aspects or dimensions) 
such as spatial, temporal, and other data.  
An example of a multifaceted tree is a \emph{phylogenetic tree},
that is, a rooted tree with labeled leaves 
where edge lengths represent genetic differences or time estimates
by assigning heights to the vertices. 
In this paper, we introduce an extension of classical node-link drawings for rooted phylogenetic trees, 
where vertices are mapped to distinct vertical strips, referred to as \emph{columns}. 
This accommodates not only an additional facet of the data 
but also introduces the possibility of crossings between edges, 
thus presenting us with new algorithmic challenges.

\paragraph{Motivation.}
Our new drawing style is motivated by the visualization of transmission trees 
(i.e., the tree of who-infected-who in an infectious disease outbreak),
which are phylogenetic trees that often have rich multifaceted data associated.
In these visualizations, geographic regions associated with each case 
are commonly represented by coloring the vertex and its incoming edge~\cite{Nextstrain}.
However, there are scenarios where colors are not available or suitable.
In such a case and in the context of a transmission tree, 
our alternative approach maps each, say, region or age group to a separate column (see~\cref{fig:example}).
The interactive visualization platform Nextstrain~\cite{Nextstrain} for tracking of pathogen evolution
partially implements this with a feature that allows the mapping of the leaves to columns based on one facet.
Yet, since they then leave out inner vertices and edges, the topology of the tree is lost.
We hope that our approach enables users to quickly grasp group sizes and
the number of edges (transmissions) between different groups.

\begin{figure}[t]
  \centering
  \includegraphics{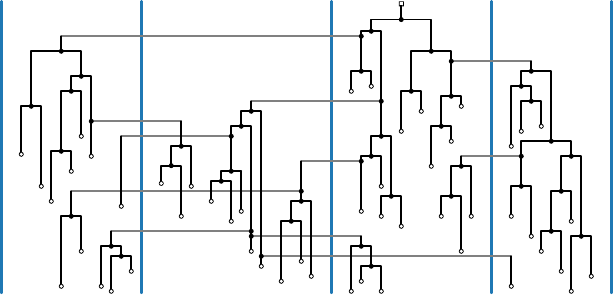}
  \caption{A binary column tree with four columns.}
  \label{fig:example}
\end{figure}

\paragraph{Related Work.}
In visualization methods for so-called reconciliation trees~\cite{CDSJJB16,CDMP20} 
and multi-species coalescent trees~\cite{Dou20,KKNW23},
a guest tree $P$ is drawn inside a space-filling drawing of a host tree $H$.
The edges of the host tree can thus closely resemble columns,
though the nature of and relationship between $P$ and $H$ 
prohibit a direct mapping to column trees.
Spatial data associated with a tree has also been visualized by juxtaposition
for so-called phylogeographic trees~\cite{PMZ+13,KKSvDZ23}.
Betz \etal~\cite{BGMRW14} investigated orthogonal drawings with column assignments,
where each column is a single vertical line (i.e. analogous to a layer assignment). 

\paragraph{Setting.}
The input for our drawings consists of an unordered rooted tree~$T$, 
where each vertex $v \in V(T)$ has an assigned \emph{height}~$h(v)$,
and a surjective \emph{column mapping} $c \colon V(T) \to [1, \ell]$ for some $\ell \geq 2$.
Together, we call $\croc{T, h, c}$ a \emph{column tree}.
We mostly assume that the order of the columns is \emph{fixed} (from~1 to~$\ell$ left-to-right),
but in a few places we also consider the case that it is \emph{variable}.
For an edge~$uv$ in~$T$, $u$ is the \emph{parent} of~$v$, 
and~$v$ is the \emph{child} of~$u$.
We call~$u$ and~$v$ the \emph{source} and \emph{target} (vertex) of~$uv$, respectively.
We call $uv$ an \emph{intra-edge} if $c(u) = c(v)$ and an \emph{inter-edge} otherwise.
The degree of~$u$ %in~$T$
is the number of children of~$u$.

\paragraph{Visualization of Column Trees.}
We draw a column tree $\croc{T, h, c}$ with a {\em rectangular cladogram} style,  
that is, each edge is drawn orthogonally and (here) downward with respect to the root;
hence we assume that~$h$ corresponds to y-coordinates 
where the root has the maximum value and every parent vertex has a strictly greater value than its children.
Each edge $uv$ has at most one bend such that the horizontal segment (if existent)
has the y-coordinate of the parent vertex~$u$.
Each column $\gamma \in [1, \ell]$ is represented by a vertical strip~$C_\gamma$ of variable width
and a vertex~$v$ with $c(v) = \gamma$ must be placed within~$C_\gamma$.
We need a few definitions to state further drawing conventions. 

A \emph{column subtree} is a maximal subtree within a column.
Note that each column subtree (except the one containing the root of $T$) 
has an incoming inter-edge to its root
and may have various outgoing inter-edges to column subtrees in other columns.
The \emph{width} of a column subtree at height $\eta$ is the number of edges
of the column subtree intersected by the horizontal line at $\mathrm{y} = \eta$.
We say that two column subtrees $A$ and $B$ \emph{interleave} in a drawing of a column tree
if there is a horizontal line that intersects first an edge of $A$, 
then one of $B$, and then again one of $A$ (or with $A$ and $B$ in reversed roles).

\begin{figure}[b]
	\centering
	\begin{subfigure}[t]{0.35 \linewidth}
		\centering
		\includegraphics[page=1]{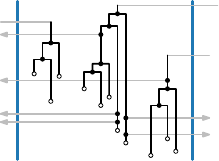}
		\caption{V1 -- subtree ``stick'' to column borders (13 crossings)}
		\label{fig:variants:one}
	\end{subfigure}
	\hfill
	\begin{subfigure}[t]{0.3 \linewidth}
		\centering
		\includegraphics[page=2]{variants}
		\caption{V2 -- non-interleaving subtrees (11 cr.)}
		\label{fig:variants:two}
	\end{subfigure}
	\hfill
	\begin{subfigure}[t]{0.3 \linewidth}
		\centering
		\includegraphics[page=3]{variants}
		\caption{V3 -- interleaving subtrees allowed (6 cr.)}
		\label{fig:variants:three}
	\end{subfigure}
	\caption{The three variants for subtree arrangements.} 
	\label{fig:variants}
\end{figure}

In graph drawing, we usually %want to
forbid overlaps.
Here, however, if a vertex has more than one child to, say, its right, 
then the horizontal segments of the edges to these children overlap.
As we permit more than three children,
we allow these overlaps and we do not count them as crossings.
On the other hand, we do not want overlaps to occur
between non-neighboring elements.
Thus, we assume %henceforth
that each vertex~$v$ being the source
of an inter-edge has a unique height $h(v)$
as otherwise we cannot always avoid overlaps
between an edge and a vertex and between two edges with distinct endpoints.
Furthermore, we %have the drawing convention
require that {\em no two intra-edges cross}. 
Hence, each column subtree is drawn planar and no two column subtrees intersect.
However, since for inter-edges it is not always possible to avoid all edge crossings,
we distinguish three drawing conventions concerning inter-edges and column subtree relations.
These increasingly trade clarity of placement 
with the possibility to minimize the number of crossings~(see~\cref{fig:variants}):
\begin{enumerate}[label=(V\arabic*),leftmargin=*]
  \item No inter-edge $uv$ intersects an intra-edge in column $c(v)$.
  \item No two column subtrees interleave.
  \item Column subtrees may interleave.
\end{enumerate}
We remark that V1 is based on the idea that going from top to bottom, 
the column subtree rooted at $v$ is greedily placed as soon as possible in $c(v)$. 
Thus, the subtrees ``stick'' to the column border closer to their root's parent.
Note that V1 implies V2 as interleaving a column subtree $B$ inside a column subtree $A$
requires that the incoming inter-edge of $B$ crosses an intra-edge of~$A$. 

\paragraph{Combinatorial Description.}
%In this paper,
We are less interested in the computation of x-coordinates
but in the underlying algorithmic problems regarding the number of crossings.
Hence, it suffices to describe a column tree drawing combinatorially in terms of what we call 
the \emph{subtree embeddings} and the \emph{subtree arrangement} of the drawing.
The subtree embedding is determined by the order of the children of each vertex in the subtree. 
The subtree arrangement of a column represents the relative order between column subtrees that overlap vertically.
From a computational perspective, one possible approach to specify the subtree arrangement is by storing, 
for each root vertex $v$ of a column subtree, which edges of which column subtrees lie horizontally to the left and to the right of $v$.
This requires only linear space and all necessary information, 
e.g., for an algorithm that computes actual x-coordinates, can be recovered in linear time.
%\todo{jz, check whether this makes more sense now;
%jz: yes, looks ok to me}
A fixed column order, a subtree embedding of each column subtree, 
and a subtree arrangements of each column 
collectively form what we refer to as an \emph{embedding} of a column tree~$\croc{T, h, c}$.

\paragraph{Problem Definition.}
An instance of the problem \tcv{$i$}, $i \in \set{1, 2,3}$, is a column tree $\croc{T, h, c}$.
The task is to find an embedding of $\croc{T, h, c}$
with the minimum total number of crossings among all
possible embeddings of~$T$ under the drawing convention V$i$.
In the decision variant, %of this problem,
an additional integer $k$ is given, 
and the task is to find a column embedding with at most $k$ crossings.
%Here, unless stated differently,
%we always assume that the column order is fixed;
%otherwise we get another three problem versions.
If the column order is not fixed, we get another three problem versions.

We define three types of crossings.
Consider an inter-edge $uv$ with $c(u) <~c(v)$.
In any drawing, $uv$ spans over the columns $c(u) + 1, \dots, c(v) - 1$
and crosses always the same set of edges in these columns.
We call the resulting crossings \emph{inter-column} crossings,
denoted by~$k_\mathsf{inter}$.
Within the column $c(u)$ [$c(v)$],
$uv$ may cross edges of the column subtree of $u$ [$v$]
and of other column subtrees. 
We call these crossings \emph{intra-subtree} and \emph{intra-column} crossings,
denoted by $k_\mathsf{subtree}$ and~$k_\mathsf{column}$, respectively.
The total number of crossings is then
$k = k_\mathsf{subtree} + k_\mathsf{column} + k_\mathsf{inter}$.

It would also be natural to seek an embedding or drawing with minimum width,
yet this is known to be NP-hard even for a single phylogenetic tree~\cite{BVGJO19}.

%\paragraph{Two Tasks.}
Observe that for \tcv{1} and V2, 
modifying the embedding of a single column subtree
can only change the intra-subtree crossings.
On the other hand, permuting the order of the column subtrees
within a column can only change the intra-column crosssings.
Finding a minimum-crossing embedding can thus be split into two tasks,
namely, \emph{embedding the subtrees} and \emph{finding a subtree arrangement}.
Furthermore, note that V1 mostly enforces a subtree arrangement
since in general each column subtree is the left-/rightmost in its column
at the height of its root. 
Only for column subtrees that are in the same column and 
whose roots share a parent, is the relative order not known.
The main task for V1 %(and a binary subtree)
is thus to find subtree embeddings.
For V3, in contrast, the two tasks are intertwined
since a minimum-crossing embedding may use locally suboptimal subtree embeddings
(see \cref{fig:variants:three}).
We briefly discuss heuristics %for V3
in \cref{sec:hardest}.
% the longer version of this paper~\cite{arxiv}. 

\paragraph{Contribution.}
We first give a fixed-parameter tractable (FPT) algorithm in the maximum vertex degree~$\Delta$ 
for the subtree embedding task, the main task for V1 and a binary column tree (\cref{sec:easy}).
We remark that the previously described applications usually use binary phylogenetic trees.
This makes our algorithm, which uses a sweep line
to determine the crossing-minimum child order at each vertex, a practically relevant polynomial-time algorithm.
On the other hand, if we have a large vertex degree,
minimizing the number of crossings is NP-hard,
even to approximate, for all variants (\cref{sec:hard}).
This holds true for the less restrictive variants V2 and V3 even for binary trees.
Leveraging a close relation between the task of finding a column subtree arrangement and the feedback arc set (\fas) problem,
we devise for V2 an FPT algorithm in the number of crossings $k$ plus $\Delta$
and an FPT-approximation algorithm in $\Delta$ (\cref{sec:algos}).
However, for the most general variant V3, the tasks of embedding the column subtrees
and arranging them cannot be considered separately,
and hence we only suggest heuristics to address this challenge (\cref{sec:hardest}).

\section{Algorithm for Subtree Embedding} % ------------------------
\label{sec:easy}
In this section, we describe how to find an optimal subtree embedding 
for a single column subtree. %, the main task of \tcv{1}.
The algorithm has an FPT running time in the maximum vertex degree~$\Delta$
making it a polynomial-time algorithm for bounded-degree trees
like binary trees, ternary trees, etc.

Our algorithm is based on the following observation,
described here for a binary column subtree~$S$.
Fix an embedding for~$S$, i.e., the child order for each vertex.
Consider an inter-edge~$e = vw$ with~$c(w) < c(v)$. 
Let~$u$ be a vertex of the path from the parent of~$v$ to the root of~$B$. 
Observe that if~$v$ is in the left subtree of~$u$, 
then~$e$ does not intersect the right subtree of~$u$.
Yet, if~$v$ is in the right subtree of~$u$,
then~$e$ intersects the left subtree of~$u$ (if it extends vertically beyond~$h(v)$).
On the other hand, if~$u$ is not on the path to the root,
then the child order of~$u$ has no effect on whether~$e$ intersects its subtrees.
Thus, to find an optimal child order of a vertex,
it suffices to consider the directions (left/right)
towards which the inter-edges of its descendants extend.
Our algorithm is illustrated in \cref{fig:algo}.

\begin{figure}[t]
	\centering
	\begin{subfigure}[t]{0.3 \linewidth}
		\centering
		\includegraphics[page=2]{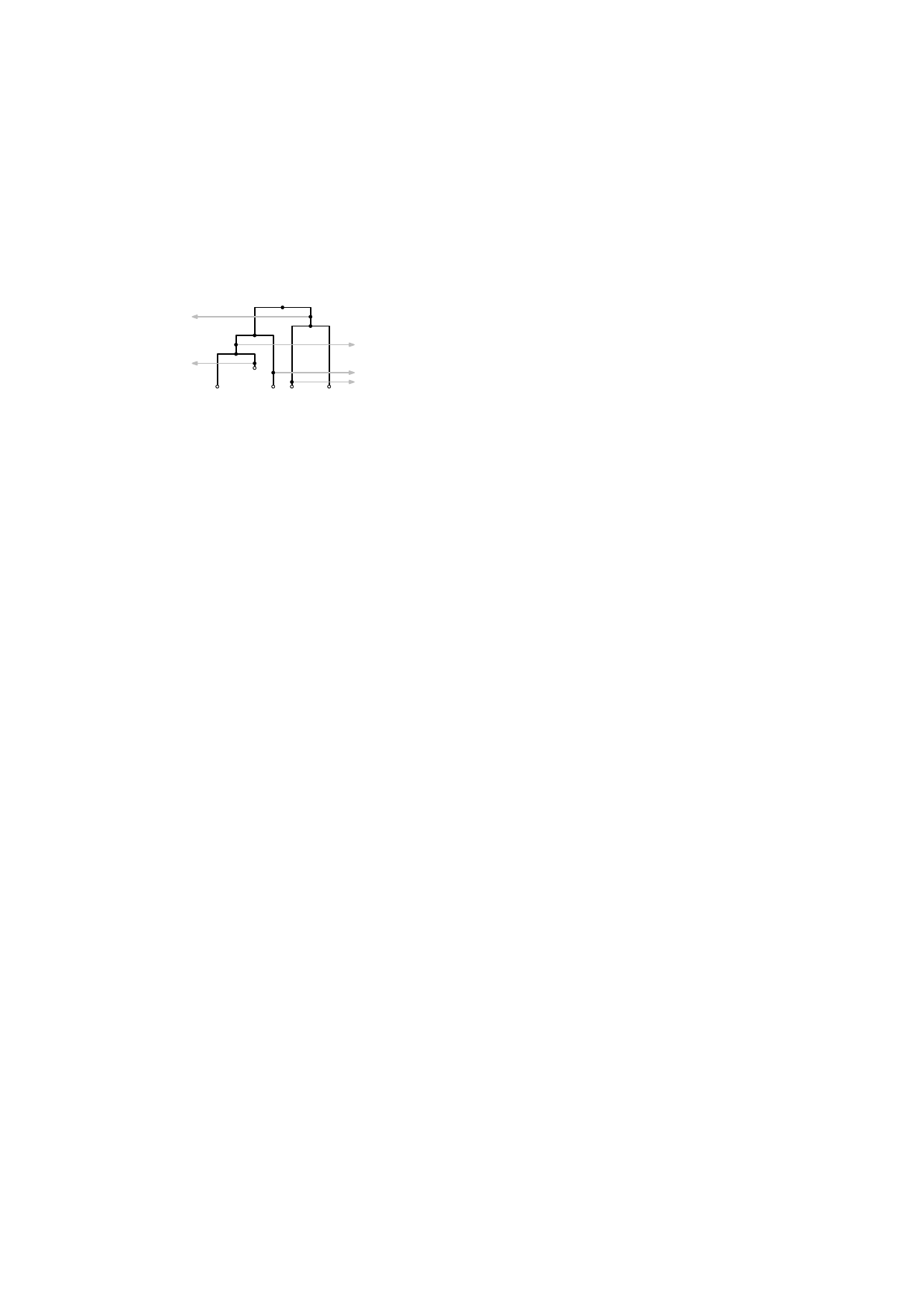}
		\caption{Initial embedding and first inter-edge.}
		\label{fig:algo:one}
	\end{subfigure}
	\hfill
	\begin{subfigure}[t]{0.3 \linewidth}
		\centering
		\includegraphics[page=3]{subtreeEmbedding}
		\caption{The second inter-edge.}
		\label{fig:algo:two}
	\end{subfigure}
	\hfill
	\begin{subfigure}[t]{0.35 \linewidth}
		\centering
		\includegraphics[page=4]{subtreeEmbedding}
		\caption{The third inter-edge.}
		\label{fig:algo:three}
	\end{subfigure}
	
	\bigskip
	
	\begin{subfigure}[t]{0.3 \linewidth}
		\centering
		\includegraphics[page=5]{subtreeEmbedding}
		\caption{Keeping a child order to avoid one crossing.}
		\label{fig:algo:four}
	\end{subfigure}
	\hfill
	\begin{subfigure}[t]{0.3 \linewidth}
		\centering
		\includegraphics[page=6]{subtreeEmbedding}
		\caption{The fourth inter-edge.}
		\label{fig:algo:five}
	\end{subfigure}
	\hfill
	\begin{subfigure}[t]{0.35 \linewidth}
		\centering
		\includegraphics[page=7]{subtreeEmbedding}
		\caption{Keeping a child order to avoid one crossing.}
		\label{fig:algo:six}
	\end{subfigure}

	\bigskip
	
	\begin{subfigure}[t]{0.3 \linewidth}
		\centering
		\includegraphics[page=8]{subtreeEmbedding}
		\caption{Changing a child order to avoid one crossing.}
		\label{fig:algo:seven}
	\end{subfigure}
	\hfill
	\begin{subfigure}[t]{0.3 \linewidth}
		\centering
		\includegraphics[page=9]{subtreeEmbedding}
		\caption{The fifth inter-edge.}
		\label{fig:algo:eight}
	\end{subfigure}
	\hfill
	\begin{subfigure}[t]{0.35 \linewidth}
		\centering
		\includegraphics[page=10]{subtreeEmbedding}
		\caption{Changing the child order of the root to avoid one crossing.}
		\label{fig:algo:nine}
	\end{subfigure}
	\caption{Steps of the sweep-line algorithm to find, for a given (here, binary) column subtree,
		a subtree embedding with the minimum number of intra-subtree crossings.
		A number at a vertex $v$ reflects the count of crossings
		for the two possible permutations of $v$
		(positive number $\Leftrightarrow$ more crossings if $v$'s child order~is~kept).
		In steps considering inter-edges, affected ancestor vertices are marked orange.} 
	\label{fig:algo}
\end{figure}

\begin{lemma} \label{clm:algo}
For an $n$-vertex column subtree $S$ with maximum degree $\Delta$ and~$t$ inter-edges,
a minimum-crossing subtree embedding of $S$ can be computed in $\Oh(\Delta! \Delta t n + n \log n)$~time.
\end{lemma}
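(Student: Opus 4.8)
The plan is to process the column subtree $S$ in a single sweep from top (root) to bottom, maintaining at each vertex $v$ the current child order together with a small table of ``crossing savings'' for each of the at most $\Delta!$ permutations of $v$'s children. Concretely, I would first sort the inter-edges of $S$ by the height $h(v)$ of their source vertex $v$ in decreasing order; this costs $\Oh(n \log n)$ and is the source of that term in the bound. I would then initialise an arbitrary child order at every vertex (cost $\Oh(n)$), and iterate over the inter-edges in sorted order. The key structural fact, already recorded in the observation preceding the lemma, is this: for an inter-edge $e = vw$ with, say, $c(w) < c(v)$, and an ancestor $u$ of $v$ inside $S$, whether $e$ crosses the sibling subtrees of $v$ hanging off the $u$-to-$v$ path depends \emph{only} on whether $v$ sits to the left or the right within each child of $u$ on that path — and crucially, $e$ never reaches below $h(v)$, so only edges that extend vertically past $h(v)$ can be hit. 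Because we process inter-edges top-down, when we handle $e$ we know exactly which sibling subtrees are still ``active'' below $h(v)$.

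The core of the argument is then to show that the optimisation \emph{decomposes per vertex}. For a fixed vertex $u$, the contribution of $u$'s child order to the total crossing count is a sum, over all inter-edges $e$ whose source lies strictly below $u$ in $S$, of an indicator that depends only on (i) the direction in which $e$ escapes (left/right, determined by $c(w)$ versus $c(v)$), (ii) which child-branch of $u$ contains the source of $e$, and (iii) the width at height $h(v)$ of each of $u$'s other child-branches. None of these quantities depends on the child order chosen at any \emph{other} vertex, so the total crossing count is literally a sum of per-vertex terms, and we may optimise each vertex independently. I would make this precise by charging every intra-subtree crossing to the unique lowest common ancestor in $S$ of the two edges involved, and checking that this ancestor's charge is a function of its own child permutation alone.

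Given the decomposition, the algorithm maintains for each vertex $u$ an array indexed by the $\Delta!$ permutations of its children, storing the number of intra-subtree crossings accrued so far under each choice; when inter-edge $e=vw$ is processed, we walk up the (at most $n$-long, but really depth-bounded) path from $\mathrm{parent}(v)$ to the root of $S$, and at each ancestor $u$ on this path we update its $\Delta!$ entries by adding, for each permutation $\pi$, the width at $h(v)$ of whichever child-branches of $u$ the edge $e$ would cross under $\pi$ — this is $\Oh(\Delta! \cdot \Delta)$ work per ancestor, to consider each permutation and, within it, each of the $\Delta$ branches; so $\Oh(\Delta! \Delta n)$ per inter-edge and $\Oh(\Delta! \Delta t n)$ overall. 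The relevant widths at height $h(v)$ can be read off in $\Oh(1)$ amortised time if we precompute, for each vertex, the sorted list of descendant heights, or simply maintain a sweep-line status structure; either way this folds into the $\Oh(n \log n)$ preprocessing. At the end, pick the minimising permutation at each vertex independently; by the decomposition this yields a globally minimum-crossing subtree embedding.

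The step I expect to be the main obstacle is the bookkeeping in the decomposition claim: one has to be careful that a single inter-edge $e$ can contribute to the charges at several ancestors simultaneously (every $u$ on the path), that these contributions are genuinely additive and non-overlapping (no crossing is counted at two different ancestors), and that the ``width at $h(v)$'' quantities are well-defined independently of the not-yet-fixed child orders below — which holds because width is a function of the subtree \emph{set}, not its embedding. The unique-height assumption on inter-edge sources (stated in the drawing conventions) is what rules out the degenerate ties that would otherwise break the additivity. Once this is nailed down, the running-time accounting and correctness are routine.
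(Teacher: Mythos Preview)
Your proposal is essentially the paper's proof: the same per-vertex decomposition (each vertex's child order can be optimised independently), the same $\Delta!$ counters per vertex, and the same walk-up-the-path update costing $\Oh(\Delta!\Delta)$ per ancestor for each inter-edge; the paper sweeps bottom-up and fixes each vertex's order once all descendant inter-edges have been processed, whereas you sweep top-down and fix all orders at the end, but given the decomposition this is cosmetic. The one loose point is your claim that the sibling-subtree widths at height $h(v)$ can be read off in $\Oh(1)$ amortised time after $\Oh(n\log n)$ preprocessing---the paper does not attempt this and instead recomputes all needed widths by an explicit $\Oh(n)$ traversal per inter-edge, which is already absorbed by the $\Oh(\Delta!\Delta\, t n)$ term, so your hand-waving there is unnecessary rather than wrong.
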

\begin{proof}
In a bottom-to-top sweep-line approach over $S$, 
we apply the following greedy-permuting strategy 
that finds, for each vertex, the child order that causes the fewest crossings.
To this end, first sort the
%non-leaf %\jz{if a leaf of $S$ is inicident to an inter-edge,
%then we should also consider that leaf.}
vertices of $S$ by ascending height in $\Oh(n \log n)$ time.
Also, each inner vertex $v$ with $d_v$ ($d_v \le \Delta$) children in~$S$
has $d_v!$ counters to store for each of the $d_v!$ possible child orders 
(of its children in~$S$) the computed number of crossings induced by that order (as described below).
Initially, all counter are set to zero.

Having started at the bottom, let $v$ be the next encountered vertex.
First, process each inter-edge~$e$ whose source is $v$.
Let $u$ be a vertex on the path~$P$ from the parent of $v$ to the root. 
%\todo{jz, repeat why we look at only those $u$ as only those determine whether $u$ crosses a subtree}
Let $T_v$ be the subtree rooted at a child of~$u$ that contains~$v$.
Compute for each subtree $T$ rooted at a child of $u$ (except for~$T_v$) the width at $h(v)$ as follows.

Initialize a counter for the width of $T$ at $h(v)$ with zero.
Traverse $T$ and whenever we encounter an edge of~$T$
that has one endpoint above and the other endpoint on or below $h(v)$,
we increment that counter.
Hence, we can determine the width of $T$ at $h(v)$
in $\Oh(n_T)$ time, where $n_T$ is the number of vertices in~$T$,
and we can determine the width of all subtrees rooted at children
of vertices from $P$ in $\Oh(n)$ time.

Recall that $u$ has $d_u!$ (with $d_u \le \Delta$)
counters for its $d_u!$ possible child orders.
For each such child order, add,
if $e$ goes to the left [right],
the width at $h(v)$ of all subtrees rooted at children of $u$
that are left [right] of $T_v$ to the corresponding counter.
Note that the resulting number is exactly
the number of intra-subtree crossings induced by $e$,
as our observation still holds for the general case
because the number of intra-subtree crossings induced by~$e$
only depends on the child orders of the vertices of~$P$.
Of course, as the numbers are independent of other inter-edges,
we can add up these numbers in our child order counters.
Updating these counters for a vertex~$u$ on~$P$
takes $\Oh(d_u! d_u) \subseteq \Oh(\Delta!\Delta)$ time.
Since the length of $P$ can be linear,
we can handle $e$ in $\Oh(\Delta!\Delta n)$ time.

Second, since the counters of $v$ only depend on
the already processed inter-edges below $v$, 
pick the child order of $v$ with the lowest counter for~$v$.

% Reviewer 1:
% It is not clear to me, how you do this within the given running time.
% "For each vertex" makes factor n, trying all possible orders makes factor
% \Delta!, leaving O(\Delta*n) time to compute the crossings for one order of the
% subtrees. If I knew for each vertex the width of the other subtrees at its
% height, I could just go through the vertices and add the widths for the
% subtrees to the left [right] and add them multiplied with the corresponding
% number of inter-edges with source v. I guess you can precompute the width of
% each column subtree at each height in O(n^2), but at that point I don't know
% whether that is what you had in mind. Please elaborate!
%
% Reviewer 3: there is no formal proof of
% correctness that the greedy approach minimizes the crossing. At
% the very least if could point out why the counts are independent
% at each level.

In total, this sweeping phase of the algorithm runs in $\Oh(\Delta! \Delta t n)$ time,
where $t$ is the number of inter-edges.
This is because, although the sweep-line algorithm has $\Theta(n)$
event points (the vertices of~$S$), we apply the
previously described $\Oh(\Delta!\Delta n)$-time subroutine
only if we encounter an inter-edge.
\end{proof}

Note that for \cref{clm:algo}, if $t = 0$, any embedding is crossing free,
and if $\Delta$ is constant, the running time is at most quadratic in $n$.

To solve an instance of \tcv{1},		
we first apply \cref{clm:algo} to each column subtree separately.
Then for each vertex with multiple outgoing inter-edges to the same column,
we try all possible orders for the respective column subtrees
and keep the best. Hence, we get the following result.

%\begin{theorem} \label{clm:easy}
%The \tcv{1} problem with constant-bounded degree and fixed column order
%can be solved constructively in quadratic time. 
%\end{theorem}
\begin{theorem} \label{clm:easy}
	\tcv{1} is fixed-parameter tractable in~$\Delta$.
	More precisely, given an instance $\croc{T, h, c}$
	with $n$ vertices and maximum vertex degree~$\Delta$,
	there is an algorithm computing an embedding of $\croc{T, h, c}$
	with the minimum number of crossings in $\Oh(\Delta!\Delta n^2)$ time.
\end{theorem}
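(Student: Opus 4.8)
The plan is to reduce the theorem to \cref{clm:algo} by exploiting the decomposition of the crossing number already observed for V1. First I would record that, under V1, the total crossing number splits as $k = k_\mathsf{subtree} + k_\mathsf{column} + k_\mathsf{inter}$, where (i) $k_\mathsf{inter}$ is an invariant of the instance $\croc{T,h,c}$, since an inter-edge $uv$ with $c(u)<c(v)$ spans exactly the columns $c(u)+1,\dots,c(v)-1$ and there crosses precisely the edges whose column index lies in that range and whose vertical extent straddles $h(v)$, independently of any embedding choice; (ii) changing the embedding of a single column subtree affects only $k_\mathsf{subtree}$; and (iii) changing the subtree arrangement affects only $k_\mathsf{column}$. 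Hence a global optimum is obtained by minimizing $k_\mathsf{subtree}$ and $k_\mathsf{column}$ independently.

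For the subtree-embedding part, let $S_1,\dots,S_q$ be the column subtrees, with $n_i=\abs{V(S_i)}$ and $t_i$ the number of inter-edges with source in $S_i$. Applying \cref{clm:algo} to each $S_i$ costs $\Oh(\Delta!\Delta t_i n_i + n_i\log n_i)$. Since $\sum_i n_i = n$ and $\sum_i t_i$ is at most the number of edges of $T$, hence less than $n$, we get $\sum_i t_i n_i \le n\sum_i t_i < n^2$ and $\sum_i n_i\log n_i \le n\log n$, so this phase runs in $\Oh(\Delta!\Delta n^2)$ time and outputs, for each column subtree, an embedding minimizing its intra-subtree crossings.

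For the subtree-arrangement part I would invoke the characterization of V1 already stated: each column subtree is the left- or rightmost column subtree in its column at the height of its own root, so the arrangement is entirely forced except for the relative order within each group $G_{v,\gamma}$ of column subtrees whose roots are children of a common vertex $v$ and lie in a common column $\gamma$ (all on the same side of $v$). Such a group has at most $\Delta$ members, hence at most $\Delta!$ orders, and reordering it changes only the crossings among its own members; moreover the crossings contributed by an ordered pair in a fixed relative order are determined by the (embedding-independent) widths of the two column subtrees at the relevant heights. So I would precompute, by a linear sweep, the pairwise crossing count for each of the $\Oh(\Delta^2)$ ordered pairs in $G_{v,\gamma}$ ($\Oh(\Delta^2 n)$ per vertex), then evaluate all $\le \Delta!$ orders by summing the appropriate pairwise counts ($\Oh(\Delta!\Delta^2)$ per vertex) and keep the minimum. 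Over all vertices this is $\Oh(\Delta^2 n^2 + \Delta!\Delta^2 n)$, which is absorbed into $\Oh(\Delta!\Delta n^2)$; combined with the first phase and noting that the $n\log n$ terms are dominated, this yields the claimed bound.

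The main obstacle is not the running-time bookkeeping but the structural justifications underpinning the decomposition: that under V1 the arrangement is indeed forced outside the sibling groups $G_{v,\gamma}$, that reordering within such a group can neither create nor destroy crossings with edges outside the group nor alter $k_\mathsf{subtree}$, and that the pairwise crossing count between two column subtrees in a fixed relative order depends only on their widths and not on their internal embeddings. I would establish these directly from the V1 "place each column subtree as high as possible against the border on its parent's side'' rule and from the definition of the width of a column subtree as the number of its edges cut by a horizontal line.
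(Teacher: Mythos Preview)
Your proposal is correct and follows essentially the same approach as the paper: apply \cref{clm:algo} to every column subtree, then resolve the residual freedom in the V1 arrangement by brute-forcing the at most $\Delta!$ orders within each sibling group $G_{v,\gamma}$. The paper's argument is just the one-sentence ``apply \cref{clm:algo} to each column subtree; then for each vertex with multiple outgoing inter-edges to the same column, try all possible orders and keep the best''; your write-up supplies the running-time bookkeeping and spells out the structural facts (independence of $k_\mathsf{inter}$, $k_\mathsf{subtree}$, $k_\mathsf{column}$; the V1 arrangement being forced outside sibling groups) that the paper records earlier as observations rather than proving inside the theorem.
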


\section{NP-Hardness and APX-Hardness} % ------------------------
\label{sec:hard}
In this section, we show that \tc becomes NP-hard
(even to approximate) when finding a subtree arrangement
is non-trivial or if we have large vertex degrees.
% This result also holds for V1 with unbounded degree.
We use a reduction from the (unweighted) \textsc{Feedback Arc Set} (\fas) problem,
where we are given a digraph $G$ and the task is to find
a minimum-size set of edges that if removed make $G$ acyclic.
We assume, without loss of generality, that~$G$ is biconnected.
\fas is one of Karp's original 21 NP-complete problems~\cite{Karp72}.
It is also NP-hard to approximate within a factor
less than~$1.3606$~\cite{DS05} and,
presuming the unique games conjecture, even within any constant factor~\cite{GHMRC11}.

\begin{theorem} \label{clm:np}
The \tc problem is NP-complete for both fixed and variable column orders
already for two columns
under V1 if the degree is unbounded, and under V2 and V3 even for a binary column tree.
Moreover, it is NP-hard to approximate within a factor
of $1.3606 - \varepsilon$ for any $\varepsilon > 0$,
and NP-hard to approximate within any constant factor presuming
the unique \mbox{games conjecture}.
\end{theorem}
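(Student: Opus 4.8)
The plan is to reduce from \fas. Given a biconnected digraph $G=(V,E)$, I would build in polynomial time a column tree $\croc{T,h,c}$ on two columns together with an integer $k$ such that $\croc{T,h,c}$ has an embedding with at most $k$ crossings exactly when $G$ has a feedback arc set of size at most $k$; in fact I would arrange for the minimum number of crossings to equal the minimum feedback arc set size, which is what makes the reduction approximation preserving. Membership in NP I would dispatch first: an embedding has a linear-size combinatorial description (subtree embeddings plus subtree arrangements) from which the number of crossings is computable in polynomial time.

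The core idea is that a subtree arrangement encodes a linear order of $V$, and every arc of $G$ is forced to contribute one crossing precisely when it is a back-arc of that order. For each vertex $v\in V$ I would create a column subtree $S_v$ and place all of them in the same column over a common height interval so that they pairwise overlap vertically; their left-to-right order is then a permutation $\pi$ of $V$, read as a linear order. For each arc $(u,v)\in E$ I would attach a small ``detector'' --- essentially one inter-edge, anchored by a tiny column subtree if needed --- whose horizontal segment sits at a private height used by no column subtree except by a single designated edge of $S_v$, and that leaves the drawing right next to $S_u$. By the orthogonal one-bend style this segment is crossing-free when $S_u$ precedes $S_v$ in $\pi$ and crosses exactly one edge (of $S_v$) otherwise. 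Choosing all detector heights pairwise distinct, and disjoint from the bodies of the $S_w$ apart from that one designated edge per target, should force each arc to contribute $0$ or $1$ crossing independently of all others, so that the crossing number of any embedding equals the number of back-arcs of $\pi$. I would use biconnectivity of $G$ to guarantee that every $S_v$ is genuinely constrained (it carries both an incoming and an outgoing detector), so that no part of the instance comes ``for free''.

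For V2 and V3 I would realize this with a binary tree: the subtrees $S_v$ hang, via inter-edges, from a caterpillar spine inside the first column, each spine vertex having the next spine vertex and one $S_v$ as its two children, and each $S_v$ being a short binary comb that offers a target edge at the right height for every arc into $v$. Since V2 forbids interleaving column subtrees, the arrangement is literally a permutation; for V3 I would additionally argue that on these instances no embedding beats a non-interleaving one --- interleaving a detector or a body can only create extra crossings --- so the same count carries over. For V1 with unbounded degree a different detector is needed, because V1 forbids an inter-edge from crossing an intra-edge in its target column. There I would exploit the only arrangement freedom V1 leaves, namely the child order at a vertex with many column subtrees below it: all $S_v$ become children of a single vertex $p$ in the first column (so $p$ has $\Theta(|V|)$ children), their order is a free permutation $\pi$, and the detector for $(u,v)$ is routed so as to pass through the bodies of the intervening column subtrees, creating one intra-column crossing with $S_v$ exactly when $S_u$ is on the wrong side, with the subtree bodies shaped so that only $S_v$ is hit.

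To finish, for a variable column order there are only the two orders of the two columns, giving mirror-image drawings with equal crossing number, so all six problem versions follow at once. Since the minimum crossing number equals the minimum feedback arc set size, an $\alpha$-approximation for \tc yields an $\alpha$-approximation for \fas, so the $1.3606-\varepsilon$ inapproximability~\cite{DS05} and, under the unique games conjecture, inapproximability within every constant~\cite{GHMRC11} transfer. \textbf{The main obstacle} I expect is the geometric design of the detector gadgets together with a globally consistent height assignment: one has to verify that in \emph{every} embedding each arc gadget contributes at most one crossing (and exactly one in the back-arc configuration), that re-embedding a column subtree --- or, for V3, interleaving subtrees --- can never avoid a forced crossing, and that no unintended crossings among gadgets, or between gadgets and subtree bodies, ever occur.
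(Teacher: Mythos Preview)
Your high-level plan --- reduce from \fas, encode a linear order of $V(G)$ as the left-to-right arrangement of column subtrees, and make each arc of~$G$ contribute crossings exactly when it is a back-arc --- is the same as the paper's. The gap is in the detector design and, consequently, in your claim that the minimum number of crossings can be made \emph{equal} to the minimum feedback-arc-set size.

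The difficulty is this: an inter-edge leaving $S_u$ towards the other column crosses, at its height, \emph{every} column subtree lying between $S_u$ and the column boundary, not just $S_v$. Your fix is to choose a ``private height'' at which only a single designated edge of $S_v$ is present. But a column subtree is connected, so its vertical extent is a contiguous interval; at any height in that interval at least one of its edges is present. Hence ``only $S_u$ and $S_v$ are present at height $h_{uv}$'' for every arc $(u,v)$ would force the intervals $\{[\mathrm{low}_w,\mathrm{high}_w]\}_{w\in V}$ to realise the underlying graph of~$G$ as an interval graph --- impossible for general \fas instances. The same obstruction hits your V1 variant (``subtree bodies shaped so that only $S_v$ is hit''). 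So the exact $0/1$ count per arc, and therefore the clean approximation-preserving argument, does not go through as described.

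The paper resolves this not by avoiding the stray crossings but by \emph{drowning} them: every $S_w$ carries a thin backbone reaching height~$0$, so the detector for $(u,v)$ unavoidably crosses up to $n-1$ backbones; but attached to $S_v$ at the detector height is a star with $n^3$ leaves, so a back-arc contributes $n^3$ crossings while the total of all unavoidable crossings is below $n^3$. Thus the crossing number lies in $[s\,n^3,(s+r)n^3]$ with $r<1$, which suffices for NP-hardness. For inapproximability one then argues that a $\beta$-approximation for \tc yields a $\beta + o(1)$-approximation for \fas (since the additive slack $r\,n^3$ is negligible against $s^\ast n^3$ when $s^\ast$ is large), which is why the final bound carries the ``$-\varepsilon$''. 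Your proposal is missing precisely this amplification idea; once you add it, the remaining pieces (caterpillar spine for the binary V2/V3 version, the observation that the $T_i$ span the full height and hence cannot usefully interleave, and the mirror argument for variable column order) line up with the paper.
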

\begin{proof}
The problem is in NP, since given an embedding for a column tree, % and an integer~$k$,
it is straightforward to check whether it has at most~$k$ crossings in polynomial~time.

To prove NP-hardness (of approximation),
we use a reduction from \fas.
Let~$G$ be an instance of \fas;
let~$n = \abs{ V(G) }$ and~$m = \abs{ E(G) }$, and
fix an arbitrary vertex order via the indices, so~$V(G) = \set{v_1, \ldots, v_n}$,
and an arbitrary edge order.
We construct a column tree~$\croc{T, h, c}$
%by devising a column tree~$\croc{T, h, c}$ and setting~$k = (s + 1) n^3$
%such that~$T$ has a column embedding with at most~$k$ crossings
%if and only if~$G$ admits a \fas of size at most~$s$;
such that there is a bijection between
vertex orders of~$G$ (where each vertex order implies a solution set
for the \fas problem)
and solutions of~$\croc{T, h, c}$,
whose number of crossings depends on
the size of the \fas solution and vice versa.
We show this first for V1 with unbounded degree;
%, and discuss the binary case for V2 and V3 afterwards.
see~\cref{fig:np}.

Give~$T$ two columns and, assuming for now that their order is fixed, 
call them left and right column.
The total height of~$T$ is~$a = 3 (m + 1)$.
The root column subtree of~$T$ is in the left column,
and has a single inner vertex $v$ at $a - 1$. 
For each~$v_i \in V(G)$, there is an inter-edge from $v$
to a column subtree~$T_i$, the \emph{vertex gadget} of~$v_i$, in the right column.
The idea is that a subtree arrangement
of $\{T_1, \dots, T_n\}$
%on all~$T_i$,~$i \in \set{1, \ldots, n}$, 
corresponds to a topological order of~$G \setminus S$
where each edge whose \emph{edge gadget} induces a large number of
crossings is in the \fas solution set~$S$.

\paragraph{Vertex Gadget.}
The subtree~$T_i$ has a \emph{backbone} path on~$\deg(v_i) + 1$ vertices
with its leaf at height 0.
Attached to the backbone of~$T_i$,
there are inter-edges and subtrees depending
on the edges incident to~$v_i$.
We describe them next.

\paragraph{Edge Gadget.}
The edge gadget for an edge~$v_iv_j \in E(G)$ consists of
an inter-edge~$e_{i,j}$ to the left column
attached to the backbone of~$T_i$
and of a star~$S_{i,j}$ with~$n^3$ leaves attached to the backbone of~$T_j$.
Both are contained in a horizontal strip of height three.
So if~$v_iv_j$ is, say, the~$t$-th edge,
then we use the heights between~$b = a - 3t$ to~$b - 2$.
The root of~$S_{i,j}$ is at height~$b$
and the leaves of~$S_{i,j}$ are at height~$b-2$.
% Analogously, there is a star~$S_{j,i}$ on~$n^3$ leaves with the same heights and with the root on the backbone of~$T_j$.
The source and the target of $e_{i,j}$
are at heights $b-1$ and $b-2$, respectively.
% Similarly, there is an inter-edge~ with its source on the backbone of~$T_j$ at height~$b - 2$
% and with its target, a leaf, at height~$b + 3$ in the right column.
Note that every~$T_i$ (on its own) can always be drawn planar.

\begin{figure}[t]
  \centering
  \begin{subfigure}[t]{0.2 \linewidth}
	\centering
    \includegraphics[page=1]{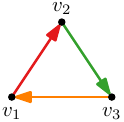}
	\caption{A FAS instance on three edges.}
	\label{fig:np:fas}
  \end{subfigure}
  \hfill
  \begin{subfigure}[t]{0.75 \linewidth}
	\centering
    \includegraphics[page=2]{np}
	\caption{In the corresponding column subtree for V1, 
	each edge gadget occupies a horizontal strip and 
	we need to find a subtree arrangement of the vertex gadgets (grey background).} 
	\label{fig:np:reduction}
  \end{subfigure}
  \caption{Reduction of a \fas instance to a \tc instance.}
  \label{fig:np}
\end{figure}

\paragraph{Analysis of Crossings.}
Note that, for each~$v_iv_j \in E(G)$, 
the inter-edge~$e_{i,j}$ induces at most~$n - 1$ crossings
with the backbones of~$\{T_1, \dots, T_n\}$
independent of the subtree arrangement.
With~$m \leq n(n-1)/2$, there can thus be at most~$n(n-1)/2 \cdot (n-1) = n^3/2 - n^2 + n/2$ 
crossings that do not involve a star subtree.
Now, if~$T_i$ is to the left of~$T_j$, 
then~$e_{i, j}$ does not intersect~$S_{i,j}$.
On the other hand, if~$T_i$ is to the right of~$T_j$,
then~$e_{i,j}$ intersects~$S_{i, j}$,
which causes~$n^3$ crossings.
So, %to sum up,
for a subtree arrangement with~$s$ edge gadgets causing crossings,
the embedding of~$T$ contains~$sn^3$ plus at most~$rn^3$
crossings, where $r = 1/2 - 1/n + 1/(2n^2) \le 1$.

\paragraph{Bijection of Vertex Orders and Subtree Arrangements.}
As each columns subtree~$T_i$ ($i \in \{1, \dots, n\}$)
corresponds to vertex~$v_i$ of $G$,
each subtree arrangement of $\croc{T, h, c}$
implies exactly one vertex order of $G$ and vice versa.
A vertex order~$\Pi$ of $G$, in turn, implies
the \fas solution set containing all edges
whose target precedes its source in~$\Pi$.
In the other direction, for a \fas solution set~$S$,
we find a corresponding vertex order 
by computing a topological order of $G \setminus S$.

Consequently, a subtree arrangement of~$\croc{T, h, c}$
whose number~$k$ of crossings lies in $[sn^3, (s+r)n^3]$
implies a \fas solution of size~$s$ and vice versa.
In particular, a minimum-size \fas solution of size~$s^*$
corresponds to an optimal subtree arrangement
with $k_{\min} = (s^*+r^*) n^3$ crossings,
where $0 \le r^* \le r \le 1$.

\paragraph{Hardness of Approximation.}
If it is NP-hard to approximate \fas
by a factor less than~$\alpha$,
then it is NP-hard to approximate \tcv{1} by
a factor less than~$\beta$ due to the previous bijection,
where $\beta$ can be bounded as follows.
\begin{equation*}
	\beta \cdot k_{\min} = \beta \cdot (s^*+r^*) n^3 \ge \alpha s^* n^3
	\quad \Leftrightarrow \quad
	\beta \ge \frac{\alpha s^*}{s^*+r^*}
	= \alpha - \frac{\alpha r^*}{s^*+r^*}
\end{equation*}
Since $s^*$ is unbounded, $\alpha r^*/(s^*+r^*)$
can be arbitrarily close to zero.
Hence, it is NP-hard to approximate \tcv{1} with unbounded maximum degree
within any factor $\alpha - \varepsilon$ for any $\varepsilon > 0$.

\begin{figure}[b]
	\centering
	\includegraphics[page=3]{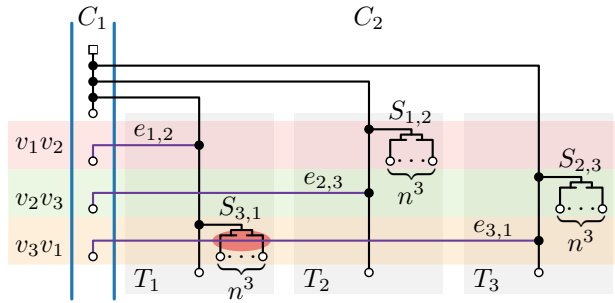}
	\caption{Reduction for V2 \& V3 with binary trees.} 
	\label{fig:np:binary}
\end{figure}

\paragraph{Other Variants.}
For a variable column order, note that, for our bijection,
the vertex order is the same but mirrored
if we swap the left and right column.

For a binary column tree under V2 and V3, 			
we let the sources of the inter-edges in the root column subtree,
which have as targets the roots of the $T_i$, form a path in the root column subtree; see \cref{fig:np:binary}.
These edges can form at most~$n^2$ pairwise crossings,
which changes $r$ and $r^*$ only slightly and
has thus no effect on our bijection.
Similarly, the star subtrees can simply be substituted with binary subtrees 
where all inner vertices have a height between $b$ and $b - 1$.
Also note that the $T_i$ cannot interleave
since they span from above the first edge gadgets all the way to the bottom.
\end{proof}

Note that in the proof of \cref{clm:np}, 
the variant with binary columns trees does not work under V1 (see~\cref{fig:np:binary}),
since then it would no longer be possible to permute the $T_i$;
there would only be one possible subtree arrangement.

%\reviin{The NP-hardness proofs use three columns. I wonder what happens if there are only two columns.\\
%Can the NP-hardness proofs be strengthened to show inapproximability?}

\section{Algorithms for Subtree Arrangement} % ------------------------
\label{sec:algos}
In the previous section, we have seen that \tcv{2} is NP-hard,
even to approximate, by reduction from \fas.
Next, we show that we can also go the other way around
and reduce \tcv{2} to \fas to obtain
an FPT algorithm in the number of crossings and the maximum vertex degree~$\Delta$,
and to obtain an FPT-approximation algorithm in $\Delta$.

\paragraph{Integer-Weighted Feedback Arc Set.}
As an intermediate step in our reduction,
we use the \textsc{Integer-Weighted Feedback Arc Set} (\ifas) problem
which is defined for a digraph $G$ as \fas but with the additional property
that each edge $e$ has a positive integer weight $w(e)$,
$w \colon E(G) \rightarrow \mathbb{N}$,
and the objective is to find a minimum-weight set of edges
whose removal results in an acyclic graph.
Clearly, \ifas is a generalization of \fas
because \fas is \ifas with unit weights.
However, an instance $\croc{G, w}$ of \ifas can also straightforwardly be expressed as
an instance $G'$ of \fas:
Obtain~$G'$ from~$G$ by substituting each edge $uv$ 
with $w(uv)$ length-two directed paths from~$u$ to~$v$.
For an illustration, see \cref{fig:ifas}.

\begin{figure}[tbh]
	\centering
	\begin{subfigure}[t]{.45 \linewidth}
		\centering
		\includegraphics[page=1]{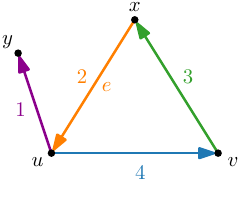}
		\caption{Input \ifas instance $\croc{G,w}$ with minimum-weight solution $S = \set{e}$.}
		\label{fig:ifas:input}
	\end{subfigure}
	\hfill
	\begin{subfigure}[t]{.46 \linewidth}
		\centering
		\includegraphics[page=2]{reduction}
		\caption{Output \fas instance $G'$ with mini\-mum-size solution $S' = \set{e_1, e_2}$.}
		\label{fig:ifas:output}
	\end{subfigure}
	\caption{Reduction of an \ifas instance to a \fas instance.}
	\label{fig:ifas}
\end{figure}

\begin{lemma} \label{clm:ifas2fas}
	We can reduce an instance $\croc{G,w}$ of \ifas with $m$ edges
	and maximum weight $w_{\max}$
	to an instance $G'$ of \fas in time $\Oh(mw_{\max})$ such that %\dots
	\begin{itemize}
		\item $G'$ has size in $\Oh(mw_{\max})$,
		\item the size of a minimum-size solution of $G'$ equals
		the weight of a minimum-weight solution of $\croc{G,w}$, and
		\item we can transform a solution of $G'$ with size $s$ in $\Oh(s)$ time
		to a solution of~$G$ with weight at most $s$.
	\end{itemize}
	% below saves two lines
% 	(i)~$G'$ has size in $\Oh(mw_{\max})$,
% 	(ii)~the size of a minimum-size solution of $G'$ equals
% 		the weight of a minimum-weight solution of $\croc{G,w}$, and
%     (iii)~we can transform a solution of $G'$ with size $s$ in $\Oh(s)$ time
% 		to a solution of~$G$ with weight at most $s$.
\end{lemma}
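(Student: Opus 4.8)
The plan is to describe the reduction explicitly and then verify each of the three bullet points. Given an \ifas instance $\croc{G,w}$, I construct $G'$ by keeping all vertices of $G$, and for each edge $uv \in E(G)$ introducing $w(uv)$ fresh intermediate vertices $x^1_{uv}, \dots, x^{w(uv)}_{uv}$, each giving a directed path $u \to x^\ell_{uv} \to v$. This replaces a single arc of weight $w(uv)$ by $w(uv)$ internally disjoint length-two paths from $u$ to $v$. The construction time and the size bound are immediate: we create $\sum_{uv} w(uv) \le m\,w_{\max}$ new vertices and $2\sum_{uv} w(uv) \le 2m\,w_{\max}$ new edges, all in $\Oh(mw_{\max})$ time.

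For the second bullet, I would argue equality of optima via two inequalities. First, from any feedback arc set $S$ of $G$ with weight $W$, build a solution $S'$ of $G'$ of size $W$: for each $uv \in S$, delete all $w(uv)$ edges of the form $u \to x^\ell_{uv}$ (that is $w(uv)$ edges per removed arc, summing to $W$). This kills every length-two path that replaced $uv$, so $G' \setminus S'$ is exactly the "subdivision" of $G \setminus S$, which is acyclic since $G \setminus S$ is. Conversely, given a minimum feedback arc set $S'$ of $G'$, I first normalize it: if $S'$ contains an edge incident to some $x^\ell_{uv}$, the only cycles through $x^\ell_{uv}$ are those using both its edges, so removing either one of the two edges at $x^\ell_{uv}$ is equivalent; hence without increasing $|S'|$ we may assume that for each pair $(u,v)$ and each $\ell$, $S'$ contains at most one edge at $x^\ell_{uv}$, and we may take it to be $u \to x^\ell_{uv}$. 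Moreover, if $S'$ hits some but not all of the parallel paths between a given $u$ and $v$, then the path $u\to v$ still survives through an untouched intermediate vertex, so that partial deletion is wasted; by an exchange argument a minimum $S'$ either deletes one edge from every one of the $w(uv)$ parallel paths or from none. Define $S = \{\,uv \in E(G) : S' \text{ deletes all } w(uv) \text{ parallel } u\text{-}v \text{ paths}\,\}$; then $|S'| = \sum_{uv \in S} w(uv) = w(S)$, and $G \setminus S$ is acyclic because any cycle in $G \setminus S$ would lift to a cycle in $G' \setminus S'$. This gives both inequalities and hence equality of the optimal values.

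The third bullet follows from the same normalization made algorithmic. Given an arbitrary solution $S'$ of $G'$ of size $s$, scan its edges once; for each edge touching an intermediate vertex $x^\ell_{uv}$, record the pair $(u,v)$. Let $S$ be the set of pairs $(u,v)$ for which all $w(uv)$ intermediates are recorded as hit, together with — to be safe — any original-looking edges, but note $G'$ has no original edges left, so $S \subseteq E(G)$ consists only of such fully-hit pairs. Each such pair contributes $w(uv)$ distinct edges to $S'$, so $w(S) \le |S'| = s$, and as argued $G \setminus S$ is acyclic, so $S$ is a valid feedback arc set of $G$ of weight at most $s$; the scan plus bookkeeping is $\Oh(|S'|) = \Oh(s)$ time (using a dictionary keyed by the pair $(u,v)$ with a counter).

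The main obstacle, and the only place needing care, is the converse direction of the optimum equality: one must rule out the possibility that a clever minimum feedback arc set of $G'$ saves edges by only partially cutting a bundle of parallel paths, or by deleting the "wrong" edge of a length-two path. Both are handled by the local exchange arguments sketched above — the key structural fact being that every cycle in $G'$ that visits $x^\ell_{uv}$ must traverse both edges at $x^\ell_{uv}$, so those two edges are interchangeable, and a bundle of parallel $u$-$v$ paths behaves, for the purpose of breaking cycles that route through $u$ then $v$, like a single arc that is "broken" only when every path in the bundle is hit. Once this normalization is in place the weight correspondence $w(S) = |S'|$ is exact and everything else is routine counting.
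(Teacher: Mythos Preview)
Your construction and arguments are correct and coincide with the paper's reduction (replace each arc $uv$ by $w(uv)$ internally disjoint length-two $u$-$v$ paths). The only noteworthy difference is that for the equality of optima you take a detour through a normalization/exchange argument on a \emph{minimum} $S'$ to force the all-or-nothing structure and obtain $|S'| = w(S)$ exactly; the paper instead reuses the third-bullet transformation (any $S'$ of size $s$ yields some $S$ with $w(S)\le s$) to get the inequality $\mathrm{opt}(\croc{G,w}) \le \mathrm{opt}(G')$ directly, which is shorter. One small bookkeeping point in your third bullet: recording only the pair $(u,v)$ loses which intermediate was hit, so to test ``all $w(uv)$ intermediates hit'' you should record the intermediate (or the triple $(u,v,\ell)$) and then count distinct intermediates per pair; this is still $\Oh(s)$ and does not affect correctness.
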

\begin{proof}
	Note that, by construction,
	$G'$ has size in $\Oh(mw_{\max})$ and can be construct in $\Oh(mw_{\max})$~time.
	
	Consider any minimum-weight solution~$S$ of~$\croc{G,w}$ with total weight~$s$.
	We can transform it to a solution~$S'$ of~$G'$ of size $s$
	by removing, for each edge $uv$ of~$S'$,
	all first edges of the $w(uv)$ length-two paths representing $uv$ in~$G'$.
	If $G' \setminus S'$ had a cycle~$C$,
	then a cycle half the length of $C$ would also occur in $G \setminus S$.
	
	For any solution~$S'$ of~$G'$ with size~$s$,
	there is also a solution~$S$ of $\croc{G,w}$ with total weight at most~$s$:
	For an edge $uv$ in $G$, consider the $w(uv)$ length-two paths representing $uv$ in $G'$.
	If $S'$ contains at least one edge of every such path,
	we add the edge $uv$ to the solution $S$.
	Clearly, the weight of~$S$ is
	at most the number of edges in~$S'$.
	Suppose for a contradiction that there is a cycle~$C$ in $G \setminus S$.
	Then, there would also be a cycle in $G' \setminus S$
	because for every edge of~$C$,
	there is at least one length-two path left in $G'$ after removing $S'$.
\end{proof}

\paragraph{Reduction to Feedback Arc Set.}
Next, we show that we can express every instance of \tcv{2}
as an instance of \ifas.
Note, however, that we have split the problem into the tasks
of embedding the subtrees and finding a subtree arrangement.
Here, we are only concerned with finding a subtree arrangement
for every column, and we assume that we separately solve
the problem of embedding every column subtree,
%with the minimum number of crossings
%in $\Oh(\Delta! \Delta t n)$ time
%where $\Delta$ is the maximum degree, $t$ is the total number of inter-edges,
%and $n$ the total number of vertices in the input column tree
e.g., by employing the algorithm from \cref{clm:algo}.

\begin{lemma} \label{clm:tc2ifas}
	We can reduce an instance $\croc{T, h, c}$ of \tcv{2} on $n$ vertices
	to an instance $\croc{G,w}$ of \ifas in time $\Oh(n^2)$ such that %\dots
	\begin{itemize}
		\item $G$ has size in $\Oh(n^2)$ and maximum weight $w_{\max} \in \Oh(n^2)$,
		\item the number of intra-column crossings in a minimum-crossing solution of $\croc{T, h, c}$ equals
		the weight of a minimum-weight solution of $\croc{G,w}$
		plus $t$ where $t$ is some integer in $\Oh(n^2)$
		depending only on $\croc{T, h, c}$, and
		\item we can transform a solution of $\croc{G,w}$ with weight $s$ in $\Oh(n^2)$ time
		to a solution of~$\croc{T, h, c}$ with at most $s + t$ intra-column crossings.
	\end{itemize}
\end{lemma}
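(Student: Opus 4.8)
The plan is to model, within each column independently, the choice of a subtree arrangement as a weighted feedback arc set instance, exploiting that---once the subtree embeddings are fixed---the intra-subtree and inter-column crossings are constants, so the intra-column crossings are the only ones that depend on the arrangement, and that these in turn decompose into \emph{pairwise} costs.

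First I would show that for two column subtrees $C,D$ sharing a column and overlapping vertically, the number of intra-column crossings between $C$ and $D$ (inter-edges incident to $C$ crossing $D$, and symmetrically) takes one of exactly two values: $a_{CD}$ if $C$ is placed left of $D$, and $b_{CD}$ otherwise; and that each of these is independent of the subtree embeddings. The reason is that every such crossing count equals the \emph{width} of one subtree at the height of a horizontal segment of an inter-edge incident to the other, and the width of a column subtree at a given height is a purely combinatorial quantity, namely the number of tree edges whose height interval contains that height. For a non-overlapping pair both values are $0$. Summing over all pairs in all columns, the intra-column crossings of any arrangement equal $\sum_{\{C,D\}}\min(a_{CD},b_{CD})$ plus the sum of $\lvert a_{CD}-b_{CD}\rvert$ over the pairs placed in the more expensive relative order. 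I set $t$ to be the first sum; it depends only on $\langle T,h,c\rangle$ and lies in $\Oh(n^2)$ because a drawing with $\Oh(n)$ edges has only $\Oh(n^2)$ crossings.

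Next I would build the digraph $G$: its vertices are the column subtrees of $T$, and for every vertically overlapping pair $\{C,D\}$ in a common column with $a_{CD}\neq b_{CD}$ we add one directed edge, oriented from the subtree that comes first in the cheaper order towards the other, with integer weight $\lvert a_{CD}-b_{CD}\rvert\geq 1$. The key structural claim is that a family of left/right decisions for the overlapping pairs is realizable by a V2 drawing if and only if it is acyclic: from an acyclic family one obtains, per column, a topological order and draws that column's subtrees left to right accordingly, yielding disjoint x-ranges and hence no interleaving; conversely, in any V2 drawing the left-to-right order within a column is a linear order, because (by the convention that no two column subtrees intersect, together with the no-interleaving condition) two vertically overlapping column subtrees lie on consistent sides of one another throughout their overlap---a planarity argument. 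Consequently a minimum-weight feedback arc set of $G$ has weight exactly (minimum intra-column crossings)$\,-\,t$, which gives the middle bullet; and from any feedback arc set $S$ of $G$ of weight $s$ we delete $S$, topologically sort $G\setminus S$ in $\Oh(n^2)$ time, and read off the corresponding arrangement, whose number of intra-column crossings is $t$ plus the total weight of the back edges of that order, hence at most $t+s$.

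Finally I would check the resource bounds: $G$ has $\Oh(n)$ vertices and $\Oh(n^2)$ edges, each of weight $\Oh(n^2)$, so its size and $w_{\max}$ are $\Oh(n^2)$; and all $\Oh(n^2)$ pairwise costs $a_{CD},b_{CD}$ are computed in $\Oh(n^2)$ total time by first computing, for each column subtree, its width profile over the $\Oh(n)$ relevant heights via one traversal and a difference array, and then, for each of the $\Oh(n)$ inter-edges and each column subtree, charging a single $\Oh(1)$ width look-up. I expect the main obstacle to be the first step: carefully identifying \emph{which} crossings are arrangement-dependent---in particular the in-column stubs of incoming and outgoing inter-edges and any crossings between two inter-edges inside a column---and verifying that each is accounted for by exactly one pair $\{C,D\}$ with an embedding-independent, order-dependent cost; this, together with the planarity argument behind the realizability-equals-acyclicity characterization, is where the care is needed.
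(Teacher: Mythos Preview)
Your proposal is correct and follows essentially the same route as the paper: both reduce the intra-column crossing count to pairwise costs $k_{ij}$ between column subtrees, set $t=\sum\min\{k_{ij},k_{ji}\}$, and build the \ifas instance by orienting an edge toward the cheaper side with weight $\lvert k_{ij}-k_{ji}\rvert$, recovering an arrangement from a topological order. Your differences are cosmetic---restricting to vertically overlapping pairs (the paper keeps all pairs, but non-overlapping ones contribute zero anyway), computing the $k_{ij}$ via per-subtree width profiles rather than the paper's single top-down sweep line, and making the ``arrangement $\leftrightarrow$ acyclic orientation'' correspondence explicit where the paper simply identifies arrangements with permutations; none of this changes the argument.
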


\begin{figure}[t]
	\centering
	\begin{subfigure}[t]{.5 \linewidth}
		\centering
		\includegraphics[page=3]{reduction}
		\caption{Possible permutation $\Pi$ of the subtrees.}
		\label{fig:tc2ifas:tcnormal}
	\end{subfigure}
	\hfill
	\begin{subfigure}[t]{.45 \linewidth}
		\centering
		\includegraphics[page=4]{reduction}
		\caption{Reverse permutation of $\Pi$.}
		\label{fig:tc2ifas:tcinverse}
	\end{subfigure}
	
	\bigskip
	
	\begin{subfigure}[t]{.44 \linewidth}
		\centering
		\includegraphics[page=5]{reduction}
		\caption{Table of pairwise intra-column crossings $k_{ij}$ occurring if subtree $T_i$ is placed to the left of subtree~$T_j$.}
		\label{fig:tc2ifas:table}
	\end{subfigure}
	\hfill
	\begin{subfigure}[t]{.45 \linewidth}
		\centering
		\includegraphics[page=6]{reduction}
		\caption{Resulting \ifas instance where the difference $k_{ij} - k_{ji}$
			determines the direction and the weight of an edge.}
		\label{fig:tc2ifas:output}
	\end{subfigure}
	
	\caption{Reduction of an \tcv{2} instance to an \ifas instance.}
	\label{fig:tc2ifas}
\end{figure}

\begin{proof}
	To each column in $c$, we apply the following reduction;
	see \cref{fig:tc2ifas}.
	Let $T_1, \dots, T_r$ be the set of column subtrees of $c$.
	For each pair $T_i, T_j$ of subtrees
	($i, j \in \set{1, \dots, r}, i \ne j$),
	we consider the number of intra-column crossings
	where only edges with an endpoint in $T_i$ and $T_j$
	are involved~--
	once if $T_i$ is placed to the left of $T_j$ and
	once if $T_j$ is placed to the left of $T_i$;
	see~\cref{fig:tc2ifas:tcnormal,fig:tc2ifas:tcinverse}.
	We let these numbers be $k_{ij}$ and~$k_{ji}$, respectively;
	see~\cref{fig:tc2ifas:table}.
	We then construct an \ifas instance $\croc{G, w}$
	where~$G$ has a vertex for every column subtree.
	For each pair $T_i, T_j$,
	$G$ has the edge $T_iT_j$ if $k_{ij}-k_{ji} < 0$, or
	$G$ has the edge $T_jT_i$ if $k_{ij}-k_{ji} > 0$,
	or there is no edge between $T_i$ and $T_j$ otherwise.
	The weight of each edge is $\abs{k_{ij} - k_{ji}}$.
	So we have an edge in the direction where the left-to-right order yields fewer intra-column crossings,
	and no edge, if the number of intra-column crossings
	is the same, no matter their~order.
	
	Note that instead of considering each column separately, 
	we can also think of~$G$ as a single graph having 
	at least one connected component for each column.
	
	\paragraph{Implementation and Running Time.}
	For counting the number of crossings between pairs of subtrees,
	we initialize~$\Oh(r^2)$ variables with zero.
	Then, we use a %sweep-line approach where we use a
	horizontal sweep line traversing all trees in parallel top-down
	while maintaining the current widths
	of $T_1, \dots, T_r$ in variables $b_1, \dots b_r$.
	Our event points are the heights of	the vertices
	(given by $h$) and the heights of the parents
	of the subtree roots (where we have the horizontal
	segment of an edge entering the column).
	
	Whenever we encounter an inter-edge~$uv$,
	we update the crossing variables.
	More precisely, if $u \in T_i$ for some $i \in \set{1, \dots, r}$,
	we consider each $j \in \set{1, \dots, r} \setminus \set{i}$
	and we set $k_{ji} +$$= b_j$ if $v$ is in a column on the left,
	and, symmetrically, we set $k_{ij} +$$= b_j$ if $v$ is in a column on the right.
	
	The running time of this approach is
	$\Oh(r (n_\mathrm{column} + m_\mathrm{column}))$ per column
	where $n_\mathrm{column}$ and $m_\mathrm{column}$
	are the numbers of vertices and edges in the column, respectively.
	Over all columns, this can be accomplished in $\Oh(n^2)$ time.
	
	\paragraph{Size of the Instance.}
	The resulting graph $G$ has $\Oh(n)$
	vertices and $\Oh(n^2)$ edges as we may have an
	edge for every pair of subtrees in a column and we may
	have a linear number of column subtrees.
	The maximum weight $w_{\max}$ in $w$ is in~$\Oh(n^2)$
	since we have $\Oh(n)$ inter-edges and
	the maximum width of a tree is in~$\Oh(n)$.
	
	\paragraph{Comparison of Optima.}
	For each column~$\gamma \in \{1, \dots, \ell\}$,
	we have the following lower bound~$L_\gamma$ 
	on the number~$k_\textrm{column}^\gamma$ of intra-column crossings:
	\begin{equation*}
		L_\gamma = \sum_{i=1}^{r-1} \sum_{j=i+1}^{r} \min\{k_{ij}, k_{ji}\} \le k_\textrm{column}^\gamma.
	\end{equation*}
	
	Now consider a minimum-crossing solution of $\croc{T, h, c}$.
	In column~$\gamma$, the column subtrees have a specific order,
	which we can associate with a permutation~$\Pi^*$ of the column subtrees.
	For simplicity, we rename the column subtrees as $T_1, \dots, T_r$ according to~$\Pi^*$.
	Then, the number of intra-column crossings is
	\begin{equation*}
		k_\textrm{column}^\gamma = \sum_{i=1}^{r-1} \sum_{j=i+1}^{r} k_{ij}.
	\end{equation*}
	Because it is a minimum-crossing solution,
	the number~$\delta_\gamma$ of additional crossings
	(i.e., the deviation of $k_\textrm{column}^\gamma$ from $L_\gamma$)
	due to ``unfavorably'' ordered pairs of subtrees is minimized.
	We can express $\delta_\gamma$ in terms of all $k_{ij}$ by
	\begin{equation*}
		\delta_\gamma = k_\textrm{column}^\gamma - L_\gamma
		= \sum_{i \ne j, k_{ij} > k_{ji}} k_{ij} - k_{ji}.
	\end{equation*}
	
	Now consider a minimum-weight solution $S$ of $\croc{G, w}$
	with weight~$s$.
	After removing~$S$ from~$G$, we have an acyclic graph
	for which we can find a topological order $\Pi$ of its vertices,
	which in turn correspond to the column subtrees in $\croc{T, h, c}$.
	Again, we rename these subtrees as $T_1, \dots, T_r$ according to~$\Pi$.
	If we arrange $G$ linearly according to $\Pi$, only the edges
	of $S$ point backward and,
	by definition of the edge directions,
	for exactly these edges $k_{ij} > k_{ji}$ holds.
	The sum
	\begin{equation*}
		s_\gamma = \sum_{i \ne j, k_{ij} > k_{ji}} k_{ij} - k_{ji}
	\end{equation*}
	of weights in $S$ whose vertices represent pairs of column subtrees in~$\gamma$
	is minimized because~$G$ has independent components for all columns in $c$.
	Therefore, $s_\gamma = \delta_\gamma$.
	Over all columns $\{1, \dots, \ell\}$, we set $t = \sum_{i = 1}^\ell L_i$,
	we have $s = \sum_{i = 1}^\ell s_i$, and we conclude
	\begin{equation*}
		k_\textsf{column} = \sum_{i = 1}^\ell k_\textrm{column}^i
		= \sum_{i = 1}^\ell (L_i + \delta_i) = \sum_{i = 1}^\ell (L_i + s_i) = s + t.
	\end{equation*}
	Note that $t \in \Oh(n^2)$ as we have $\Oh(n^2)$ pairs of edges,
	which cross at most once.
	
	\paragraph{Transforming a Solution Back.}
	Similarly, we can find for any solution $S$ of $\croc{G, w}$ with size~$s$
	a topological order, which corresponds to a subtree arrangement in $\croc{T, h, c}$.
	There, together with the $t$ unavoidable crossings,
	we have at most $s$	additional crossings
	due to ``unfavorably'' ordered pairs.
	We can find the topological order in linear time
	in the size of~$G$, which is in $\Oh(n^2)$.
\end{proof}

When we combine \cref{clm:tc2ifas,clm:ifas2fas}, we obtain \cref{clm:tc2fas}.

\begin{corollary}
	\label{clm:tc2fas}
	We can reduce an instance $\croc{T, h, c}$ of \tcv{2}  on $n$ vertices
	to an instance~$G$ of \fas in time $\Oh(n^4)$ such that %\dots
	\begin{itemize}
		\item $G$ has size in $\Oh(n^4)$,
		\item the number of intra-column crossings in a minimum-crossing solution of $\croc{T, h, c}$ equals
		the size of a minimum-size solution of $G$ plus~$t$
		where $t$ is some integer in $\Oh(n^2)$
		depending only on $\croc{T, h, c}$, and
		\item we can transform a solution of $G$ with size $s$ in $\Oh(n^2)$ time
		to a solution of~$\croc{T, h, c}$ with $s + t$ crossings.
	\end{itemize}
\end{corollary}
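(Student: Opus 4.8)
The plan is to chain the two reductions from \cref{clm:tc2ifas,clm:ifas2fas} and track how the bounds compose. First I would apply \cref{clm:tc2ifas} to the given \tcv{2} instance $\croc{T, h, c}$ on $n$ vertices, obtaining in $\Oh(n^2)$ time an \ifas instance $\croc{G_0, w}$ with $m_0 \in \Oh(n^2)$ edges and maximum weight $w_{\max} \in \Oh(n^2)$, together with the promised integer $t \in \Oh(n^2)$. Then I would feed $\croc{G_0, w}$ into \cref{clm:ifas2fas}, which in $\Oh(m_0 w_{\max})$ time produces a \fas instance $G$ of size $\Oh(m_0 w_{\max})$. Plugging in $m_0, w_{\max} \in \Oh(n^2)$ gives $\Oh(m_0 w_{\max}) = \Oh(n^4)$, so both the total construction time $\Oh(n^2) + \Oh(n^4)$ and the size of $G$ lie in $\Oh(n^4)$, establishing the first bullet.

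For the second bullet I would compose the optimality statements. By \cref{clm:ifas2fas}, the minimum size of a \fas solution of $G$ equals the minimum weight of an \ifas solution of $\croc{G_0, w}$, and by \cref{clm:tc2ifas} this minimum weight equals the minimum number of intra-column crossings of $\croc{T, h, c}$ minus $t$. Rearranging yields the claimed identity, and $t \in \Oh(n^2)$ is inherited verbatim from \cref{clm:tc2ifas} since \cref{clm:ifas2fas} introduces no further additive offset.

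For the third bullet I would concatenate the back-transformations. Given a \fas solution of $G$ of size $s$, \cref{clm:ifas2fas} turns it in $\Oh(s)$ time into an \ifas solution of $\croc{G_0, w}$ of weight $s' \le s$, and \cref{clm:tc2ifas} then turns that in $\Oh(n^2)$ time into a subtree arrangement of $\croc{T, h, c}$ with at most $s' + t \le s + t$ intra-column crossings. Since a minimum solution has size $\Oh(n^2)$ (because $k_\mathsf{column}$ and $t$ are both $\Oh(n^2)$), the $\Oh(s)$ term is absorbed into $\Oh(n^2)$, so the composed back-transformation runs in $\Oh(n^2)$ time.

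There is no genuine obstacle here: the corollary is a mechanical composition of two black boxes. The only point requiring a moment of care is the parameter bookkeeping -- checking that the weight bound $w_{\max} \in \Oh(n^2)$ from \cref{clm:tc2ifas}, once substituted into the $\Oh(m w_{\max})$ size blow-up of \cref{clm:ifas2fas}, yields $\Oh(n^4)$ rather than something larger, and that the additive constant $t$ passes through \cref{clm:ifas2fas} unchanged precisely because that reduction preserves solution sizes exactly.
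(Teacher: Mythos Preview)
Your proposal is correct and follows exactly the approach the paper intends: the paper states only that the corollary is obtained by combining \cref{clm:tc2ifas,clm:ifas2fas}, and you have spelled out that composition with careful bookkeeping. One small caveat: in the third bullet you justify absorbing the $\Oh(s)$ back-transformation cost into $\Oh(n^2)$ by appealing to the size of a \emph{minimum} solution, but the statement is phrased for an arbitrary solution of size~$s$; for the uses in \cref{clm:fptv2,clm:approx} (optimal or near-optimal solutions) your bound holds, though strictly speaking an arbitrary edge set in $G$ could have size up to $\Oh(n^4)$.
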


\paragraph{Fixed-Parameter Tractable Algorithm.}
For \tcv{2}, one of the most natural parameters is the number of crossings,
which is also the objective value.
With our reduction to %\ifas and
\fas at hand,
it is easy to show that \tcv{2} is fixed-parameter tractable (FPT) in this parameter for bounded-degree column trees.
This follows from the fact that \fas is FPT in its natural parameter
(the solution size)
as first shown by Chen, Liu, Lu, O'Sullivan, and Razgon~\cite{CLLOR08}.

\begin{theorem} \label{clm:fptv2}
	\tcv{2} is fixed-parameter tractable in the number~$k$ of crossings
	plus~$\Delta$.
	More precisely, given an instance $\croc{T, h, c}$
	with $n$ vertices and maximum vertex degree~$\Delta$,
	there is an algorithm computing an embedding
	with the minimum number $k$ of crossings in $\Oh(\Delta!\Delta n^2 + n^{16} 4^k k^3 k!)$ time.
\end{theorem}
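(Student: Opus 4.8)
The plan is to combine the two algorithmic ingredients already established: the subtree-embedding algorithm of \cref{clm:algo} and the reduction to \fas from \cref{clm:tc2fas}, together with the known FPT algorithm for \fas in its solution size. Recall that for V2 the minimum-crossing problem splits cleanly into two independent tasks — embedding each column subtree, and arranging the subtrees within each column — so we may handle them one after the other.

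\textbf{Step 1: solve the subtree-embedding task.} First I would run the algorithm of \cref{clm:easy} (equivalently, apply \cref{clm:algo} to each column subtree and then brute-force the order of co-parented subtrees going to the same column). This fixes an optimal subtree embedding for every column subtree and accounts for the $k_\mathsf{subtree}$ and $k_\mathsf{inter}$ contributions to the crossing count; it runs in $\Oh(\Delta!\Delta n^2)$ time. Since under V2 changing a subtree embedding cannot affect the intra-column crossings and changing the arrangement cannot affect the other two crossing types, this is safe to do in isolation.

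\textbf{Step 2: reduce to \fas and solve it.} With the embeddings fixed, I would invoke \cref{clm:tc2fas} to produce, in $\Oh(n^4)$ time, a \fas instance $G$ of size $\Oh(n^4)$ whose minimum solution size equals $k_\mathsf{column} - t$ for a known integer $t \in \Oh(n^2)$. Observe that the number of crossings we are allowed to ``spend'' on \fas is $k - k_\mathsf{subtree} - k_\mathsf{inter} - t$, which is at most $k$; call this budget $k'$. I would then run the FPT algorithm of Chen \etal~\cite{CLLOR08} for \fas parameterized by solution size on $G$ with parameter $k'$. That algorithm runs in time $\Oh(N^c \cdot f(k'))$ on an $N$-edge instance, where $N \in \Oh(n^4)$ here — hence the $n^{16}$ factor in the claimed bound — and $f(k') = 4^{k'} k'^3 k'!$ gives the $4^k k^3 k!$ term (using $k' \le k$ and monotonicity of $f$). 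Finally, I would transform the \fas solution back to a subtree arrangement via \cref{clm:tc2fas} in $\Oh(n^2)$ time, and combine with the embeddings from Step~1 to output the full embedding; if \fas reports no solution of size $\le k'$, we report that no embedding with $\le k$ crossings exists.

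\textbf{The main obstacle} is bookkeeping rather than any genuine difficulty: one must argue carefully that the crossing budget passed to \fas is bounded by $k$ (so the running time is FPT in $k$ and not in some larger auxiliary quantity), and that the additive constants $k_\mathsf{subtree}$, $k_\mathsf{inter}$, and $t$ are all computable in the claimed time and are subtracted correctly. I would also double-check that the polynomial blow-up through the chain $\tcv{2} \to \ifas \to \fas$ composes to $\Oh(n^4)$ for the instance size, so that plugging $N \in \Oh(n^4)$ into the polynomial factor $N^4$ of the Chen \etal\ algorithm yields the stated $n^{16}$; and that running Step~1 first does not increase $k$ beyond what Step~2's analysis assumes. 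Summing the three running-time contributions $\Oh(\Delta!\Delta n^2)$, $\Oh(n^4)$, and $\Oh(n^{16} 4^k k^3 k!)$ gives the bound in the statement, and correctness follows because each of the two subproblems is solved optimally and, under V2, they are independent and their crossing counts add.
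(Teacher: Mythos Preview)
Your proposal is correct and follows essentially the same route as the paper: optimally embed the column subtrees via \cref{clm:algo}, reduce the remaining arrangement task to \fas via \cref{clm:tc2fas}, solve \fas with the Chen~\etal\ algorithm, and transform back. The only differences are cosmetic~--- you frame Step~2 as a decision problem with an explicit budget~$k'$, whereas the paper simply computes the optimal \fas solution and then observes $s \le k$; and your invocation of \cref{clm:easy} (with its brute-forcing of co-parented subtree orders) is slightly more than needed under V2, since the arrangement task already handles all relative subtree orders, but this is harmless.
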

\begin{proof}
	First, we find an embedding of every column subtree
	inducing the minimum number of intra-subtree crossings
	in $\Oh(\Delta!\Delta n^2)$ time %via the algorithm from
	(see \cref{clm:algo}).
	
	Employing \cref{clm:tc2fas}, we reduce $\croc{T, h, c}$ in time
	$\Oh(n^4)$ to an instance~$G$ of \fas of size $\Oh(n^4)$.
	We compute a minimum-size solution $S$ for the \fas instance~$G$
	in time $\Oh((n^4)^4 4^s s^3 s!)$ where $s = |S|$~\cite{CLLOR08}.
	In $\Oh(n^2)$ time, we use $S$ to compute
	a subtree arrangement of $\croc{T, h, c}$.
	We return the resulting embedding.
	
	This solution has $k = k_\mathsf{subtree} + k_\mathsf{column} + k_\mathsf{inter}$
	crossings,
	where $k_\mathsf{subtree}$ and $k_\mathsf{column}$
	are minimum due to \cref{clm:algo} and \cref{clm:tc2fas}, respectively.
	As $k_\mathsf{inter}$ is always the same,
	the resulting embedding has the minimum number of crossings.
	
	Overall, this algorithm runs in $\Oh(\Delta!\Delta n^2 + n^4 + (n^4)^4 4^s s^3 s! + n^2)$ time.
	Since $s \le k$, this is an FPT algorithm in $k + \Delta$.
	%	with a running time in $\Oh(\Delta!\Delta n^2 + n^{16} 4^k k^3 k!)$.
\end{proof}

\paragraph{Approximation Algorithm.}
Similar to our FPT result, we can use any approximation algorithm
for \fas to approximate \tcv{2}. 
It is an unresolved problem %in theoretical computer science
whether \fas admits a constant-factor approximation. %or not.
In case such an approximation is found,
this immediately propagates to \tcv{2}.
Currently, we can employ the best known approximation
algorithm of \fas due to Even, Naor, Schieber, and Sudan~\cite{ENSS98},
which has an approximation factor of $\Oh(\log n' \log \log n')$,
where $n'$ is the number of vertices in the \fas instance.
We remark that this algorithm involves solving a linear program
and the authors do not write much about precise running time bounds,
which we also avoid here.

\begin{theorem} \label{clm:approx}
	There is an approximation algorithm that,
	for a given instance $\croc{T, h, c}$ of \tcv{2}
	with $n$ vertices and maximum degree~$\Delta$,
	computes in $\Oh(\mathsf{poly}(n) \cdot \Delta!\Delta)$ time
	an embedding where the number of crossings
	is at most $\Oh(\log n \log \log n)$ times the minimum number of crossings.
\end{theorem}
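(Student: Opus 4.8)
The plan is to reuse the pipeline of \cref{clm:fptv2} but to replace the exact \fas solver by the $\Oh(\log n' \log\log n')$-approximation algorithm of Even, Naor, Schieber, and Sudan~\cite{ENSS98}, where $n'$ denotes the number of vertices of the \fas instance. Concretely, I would first compute, for every column subtree, an embedding minimizing its intra-subtree crossings via \cref{clm:algo}; since under V2 the subtree embeddings influence only $k_\mathsf{subtree}$, this fixes $k_\mathsf{subtree}$ to its global minimum in $\Oh(\Delta!\Delta n^2)$ time. Then I would apply \cref{clm:tc2fas} to reduce, in $\Oh(n^4)$ time, the remaining subtree-arrangement task to a \fas instance $G$ of size $n' \in \Oh(n^4)$; here $t \in \Oh(n^2)$, the number of unavoidable intra-column crossings, is a quantity depending only on the input. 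Running the algorithm of~\cite{ENSS98} on $G$ returns a feedback arc set $S$ of size $s \le \alpha\, s^*$, where $s^*$ is the optimum for $G$ and $\alpha \in \Oh(\log n' \log\log n')$; since $\log(n^4) = \Oh(\log n)$ and $\log\log(n^4) = \log 4 + \log\log n = \Oh(\log\log n)$, this gives $\alpha \in \Oh(\log n \log\log n)$. Finally, I would transform $S$ back into a subtree arrangement in $\Oh(n^2)$ time via \cref{clm:tc2fas} and output the resulting embedding.

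The heart of the argument is to check that the multiplicative error incurred on the \fas subproblem carries over, unamplified, to the full crossing count. By \cref{clm:tc2fas}, the embedding just produced has at most $s + t$ intra-column crossings, while a minimum-crossing embedding has exactly $s^* + t$ of them. Because $\alpha \ge 1$ and $t \ge 0$, we obtain $k_\mathsf{column} \le s + t \le \alpha s^* + t \le \alpha(s^* + t)$, so $k_\mathsf{column}$ is within a factor $\alpha$ of its minimum. Under V2 the total splits as $k = k_\mathsf{subtree} + k_\mathsf{column} + k_\mathsf{inter}$ with the three summands governed by disjoint parts of the embedding, so an optimal embedding attains simultaneously the minimum $k_\mathsf{subtree}$, the minimum $k_\mathsf{column}$, and the (fixed) $k_\mathsf{inter}$. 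Writing $k_\mathsf{subtree}^{\min}$, $k_\mathsf{column}^{\min}$ for these minima and $\mathrm{OPT}$ for the optimal total, our solution satisfies $k = k_\mathsf{subtree}^{\min} + k_\mathsf{column} + k_\mathsf{inter} \le k_\mathsf{subtree}^{\min} + \alpha\, k_\mathsf{column}^{\min} + k_\mathsf{inter} \le \alpha\,(k_\mathsf{subtree}^{\min} + k_\mathsf{column}^{\min} + k_\mathsf{inter}) = \alpha \cdot \mathrm{OPT}$, again using $\alpha \ge 1$ and nonnegativity of $k_\mathsf{subtree}$ and $k_\mathsf{inter}$.

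For the running time, the subtree embeddings cost $\Oh(\Delta!\Delta n^2)$, the reduction of \cref{clm:tc2fas} and the back-transformation cost $\Oh(n^4)$, and the \fas approximation of~\cite{ENSS98} runs in time polynomial in $n' \in \Oh(n^4)$, hence polynomial in $n$ (we do not track the precise exponent, following the convention of~\cite{ENSS98}); summing these yields $\Oh(\mathsf{poly}(n)\cdot\Delta!\Delta)$. The only subtle point is the ratio bookkeeping above: one must note that the additive offset $t$ of \cref{clm:tc2fas} is the \emph{same} integer for our solution and for the optimum (it depends only on $\croc{T,h,c}$), and that under V2 changing the subtree arrangement affects neither $k_\mathsf{subtree}$ nor $k_\mathsf{inter}$; without this decomposition the approximation guarantee could degrade. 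Everything else is a direct substitution of an approximate solver into the exact pipeline of \cref{clm:fptv2}.
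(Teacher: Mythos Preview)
Your proposal is correct and follows essentially the same approach as the paper: compute optimal subtree embeddings via \cref{clm:algo}, reduce the arrangement task to \fas via \cref{clm:tc2fas}, apply the Even--Naor--Schieber--Sudan approximation, and lift the guarantee back using the additive decomposition $k = k_\mathsf{subtree} + k_\mathsf{column} + k_\mathsf{inter}$. Your write-up is in fact more explicit than the paper's about why the additive offset $t$ does not spoil the ratio (the chain $s+t \le \alpha s^* + t \le \alpha(s^*+t)$), which is a welcome clarification.
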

\begin{proof}
	First, we find an embedding of every column subtree
	inducing the minimum number of intra-subtree crossings
	in $\Oh(\Delta!\Delta n^2)$ time %via the algorithm from
	(see \cref{clm:algo}).

	Employing \cref{clm:tc2fas}, we reduce $\croc{T, h, c}$
	in polynomial time to an instance~$G$ of \fas of size $\Oh(n^4)$.
	Using the algorithm by Even et al.~\cite{ENSS98},
	we compute an approximate solution $S$ for the \fas instance~$G$
	where $|S|$ is greater than an optimal solution by at
	most a factor in $\Oh(\log n^4 \log \log n^4) = \Oh(\log n \log \log n)$.
	From $S$, we compute a subtree arrangement in polynomial time.
	We return the resulting embedding.
	
	In this embedding, the number of inter-column crossings
	and the number of intra-subtree crossings is minimum,
	and only the number of intra-column crossings is approximated.
	Therefore, the approximation factor of
	$\Oh(\log n \log \log n)$ holds for the total number~$k$
	of crossings as well.
\end{proof}

\section{Heuristic Approaches for Harder Versions} % ------------------------
\label{sec:hardest}
Here, we briefly discuss how to approach harder variants of the \tc problem,
specifically under V3 or with a variable column order.

\paragraph{\tcv{3}.}
Recall that, compared to V2, in V3 column subtrees may interleave
within a column.
By \cref{clm:np}, the \tcv{3} problem is NP-hard even for binary column trees
and the proof does not require interleaving.
However, as shown in~\cref{fig:variants}, 
interleaving can help to reduce the number of crossings.
So while the two tasks of subtree embedding and subtree arrangement cannot be considered separately,
it is still natural to try this as a heuristic.

A greedy approach for \tcv{3} could work as follows. 
First, find an optimal embedding for each column subtree
using the algorithm from \cref{clm:algo}.
Second, for each column,
sort the column trees in descending order based on their root heights.
Then, in this order, add one column tree at a time 
at the horizontal position where it induces
the smallest number of new crossings.
We leave the details on how to find this position open for now.
And while we think that this strategy can work well in practice,
we want to point out that, in general, it can
cause linearly more crossings than the optimum:
The two column trees in~\cref{fig:greedyBad} can be drawn 
without intra-subtree crossings (\cref{fig:greedyBad:one,fig:greedyBad:two,fig:greedyBad:three}),
yet even with interleaving we get a linear number of (intra-column) crossings.
However, at the cost of two intra-subtree crossings,
we can reduce the total number of crossings to four; see~\cref{fig:greedyBad:four}.

\begin{figure}[tbh]
	\centering
	\begin{subfigure}[t]{0.23 \linewidth}
		\centering
		\includegraphics[page=1]{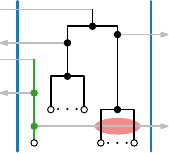}
		\caption{}
		\label{fig:greedyBad:one}
	\end{subfigure}
	\hfill
	\begin{subfigure}[t]{0.23 \linewidth}
		\centering
		\includegraphics[page=2]{greedyBad}
		\caption{}
		\label{fig:greedyBad:two}
	\end{subfigure}
	\hfill
	\begin{subfigure}[t]{0.23 \linewidth}
		\centering
		\includegraphics[page=3]{greedyBad}
		\caption{}
		\label{fig:greedyBad:three}
	\end{subfigure}
	\hfill
	\begin{subfigure}[t]{0.23 \linewidth}
		\centering
		\includegraphics[page=4]{greedyBad}
		\caption{}
		\label{fig:greedyBad:four}
	\end{subfigure}
	\caption{With $x \approx n/2$, computing optimal subtree embeddings first (a--c) can cause linearly more crossings
		than using a non-optimal subtree embedding and interleaving (d). (Each of the schematic subtrees has $x$ leaves.)} 
	\label{fig:greedyBad}
\end{figure}

\paragraph{Variable Column Order.}
In many cases in practice, the column order can be assumed to be fixed
as there is, for example, a natural order for age groups
and even for geographic regions of a country, 
there might be a preferred or standard ordering.
Let us nonetheless assume now that the column order is not fixed but variable.
We want to remark that we can still get the same results from \cref{sec:algos} (under V2)
if we add another parametric dependence on the number~$\ell$ of columns;
we can try all $\ell!$ permutations of the columns and apply these algorithms to each of them.
While we can expect $\ell$ in practice to be low, 
this becomes in general, of course, quickly infeasible.
Alternatively, we might try to find a column order that minimizes the number of inter-column crossings,
for example with a greedy algorithm.
The three different tasks, namely, finding a column order, subtree embeddings, and subtree arrangement,
would thus focus on the three different crossings~types.

\section*{Acknowledgments}
We thank the anonymous reviewers for their helpful comments.

% ---- Bibliography ----
\pdfbookmark[1]{References}{References}
\bibliographystyle{plainurl}
\bibliography{sources}

\end{document}